\newtheorem{remark}{\textbf{Remark}}
\newtheorem{lem}{\textbf{Lemma}}
\newtheorem{theorem}{\textbf{Theorem}}
\newtheorem{proposition}{\textbf{Proposition}}
\newtheorem{assumption}{\textbf{Assumption}}
\newtheorem{definition}{\textbf{Definition}}
\title{\LARGE \bf
Mitigation and Resiliency  of Multi-Agent Systems Subject to Malicious Cyber Attacks on Communication Links}
\author{Mahdi Taheri$^{1}$, Khashayar Khorasani$^{1}$, Iman Shames$^{2}$, and Nader Meskin$^{3}$
\thanks{$^{1}$Mahdi Taheri (m$\_$eri@encs.concordia.ca) and Khashayar Khorasani (kash@ece.concordia.ca) are with the Department of Electrical and Computer Engineering, Concordia University, Montreal, Canada.}%
\thanks{$^{2}$Iman Shames (iman.shames@unimelb.edu.au) is with the Department of Electrical and Electronic Engineering, University of Melbourne, Melbourne, Australia.}%
\thanks{$^{3}$Nader Meskin (nader.meskin@qu.edu.qa) is with the Department of Electrical Engineering, Qatar University, Doha, Qatar.}%
\thanks{The authors would like to acknowledge the financial support received from NATO under the Emerging Security Challenges Division program. K. Khorasani and N. Meskin would like to acknowledge the support received from NPRP grant number 10-0105-17017 from the Qatar National Research Fund (a member of Qatar Foundation). K. Khorasani would also like to acknowledge the support received from the Natural Sciences and Engineering Research Council of Canada (NSERC) and the Department of National Defence (DnD) under the Discovery Grant and DnD Supplemental Programs. The statements made herein are solely the responsibility of the authors.}
}
\begin{document}

\maketitle
\thispagestyle{empty}
\pagestyle{empty}

\begin{abstract}
	This paper aims at investigating a novel type of cyber attack that is injected to multi-agent systems (MAS) having an underlying directed graph. The cyber attack, which is designated as the controllability attack, is injected by the malicious adversary  into the communication links among the agents. The adversary, leveraging the compromised communication links disguises the cyber attack signals and attempts to take control over the entire network of MAS. The adversary aims at achieving this by directly attacking only a subset of the multi-agents. Conditions under which the malicious hacker has control over the entire MAS network are provided. Two notions of security controllability indices are proposed and developed. These notions are utilized as metrics to evaluate the controllability that each agent provides to the adversary for executing the malicious cyber attack. Furthermore, the possibility of introducing zero dynamics cyber attacks on the MAS through compromising the communication links is also investigated. Finally, an illustrative numerical example is provided to demonstrate the effectiveness of our proposed methods.
\end{abstract}

\section{Introduction}
Multi-agent systems (MAS), due to their wide range of applications, such as in unnamed aerial vehicles (UAV), next generation aerospace and transportation systems, autonomous and drive-less cars, have been a major topic of research during the past decade \cite{1333204,4804653,DAVOODI2016185,dfdfi,semsar2009multi}. One of the challenges in MAS is to reach a consensus among the agents in a distributed manner. This problem has been addressed for systems having various types of linear and nonlinear dynamics \cite{1333204,xu2013consensus,taheri2017adaptive,li2009consensus}. To achieve consensus among agents, each agent needs to transmit its information to its nearest neighboring agents. This communication is carried out through network channels that exist among the agents.

Existence of  communication networks make the multi-agent systems to be vulnerable to cyber attacks. Suppose a group of agents are on an intelligence, surveillance, and reconnaissance (ISR) mission and an intelligent adversary performs an attack on the incoming communication links for a subset of these agents. The adversary, using the incoming communication signals can directly modify the received data associated with the compromised agents. In such a scenario, one is interested in characterizing conditions under which the adversary is capable of taking over and controlling the remainder of the agents. This is the main question that we are investigating and providing solutions to in this paper.

The topic of security in cyber-physical systems (CPS) has received a considerable amount of attention in recent years \cite{adaii,ascf,barboni2019distributed,docaazdaicps,siolcps}. In \cite{adaii}, a framework for cyber attacks and their monitoring methods was proposed. Moreover, limitations on monitoring of the cyber attacks for linear time invariant (LTI) systems were studied. The authors in \cite{ascf} have studied various attack scenarios, such as the replay, zero dynamics, and bias injection attacks for LTI systems based on their proposed framework.

Cyber attacks in MAS and their detection methods have been studied from a system theoretic point of view. Secure consensus tracking control strategies considering two types of attacks were proposed for MAS in \cite{feng2016distributed}. A distributed impulsive control for achieving synchronization in MAS subject to false data injection attacks has also been proposed in \cite{he2018secure}. The work in \cite{ma2016consensus} has suggested a control scheme for multi-agent systems with nonlinearities to reach a consensus while the agents are under deception attacks. In \cite{mustafa2019attack}, cyber-physical attacks on MAS using a system theoretic approach has been studied. It was shown that the attack on one agent can spread into other agents that are reachable from the attacked agent. However, there are limitations and shortcomings in the above work as all cyber attacks on  MAS  are treated as similar to attacks on standard LTI systems. On the other hand, cyber attacks on communication channels among the agents and their significance and impacts have not been addressed and studied in the literature.

For various classes of MAS the controllability conditions are different as discussed in \cite{guan2017controllability,ji2006leader,lou2012controllability,rahmani2009controllability,zhang2013upper}. In \cite{zhang2013upper}, controllability of MAS under undirected network typologies was studied and upper and lower bounds on the controllable subspace of single integrator agents were given in terms of the distance and equitable partitions.

In the present paper, LTI MAS systems having \textit{directed graphs} that are equipped with dynamic output feedback controllers that are also under adversarial false data injection attacks on their communication channels are considered.
Our first objective is to investigate controllability of multi-agent systems from the adversary's point of view. Next, by utilizing the MAS graph topology notions of security controllability indices are introduced. These metrics are utilized in determining from each directly attacked agent how many other agents can be compromised and controlled by the malicious adversary. Finally, conditions under which the adversary is capable of executing zero dynamics cyber attacks on the entire MAS network are provided.

Consequently, the main contributions of this work can be stated as follows. We first introduce the notion of controllability attacks on communication channels of the MAS systems. The importance of these attacks by studying and developing conditions that would provide the adversary full control over the entire MAS system is developed and formalized. Second, it is shown that the adversary is not capable of exciting zero dynamics of the directly attacked and healthy agents simultaneously.

The remainder of the paper is organized as follows. In Section \ref{s:preliminary}, the basic concepts in graph theory that are required are presented and  model of  MAS systems along with their observers are provided. Model of  MAS systems where the communication channels are under attack as well as the objectives of this paper are introduced in Section \ref{s:formulation}. In Section \ref{s:controllability}, necessary and sufficient conditions for the adversary to gain full control over the MAS systems network are formulated and presented. The limitations on zero dynamics attacks that the adversary is capable of injecting by compromising the communication channels are investigated in Section \ref{s:zero}. An illustrative numerical example to demonstrate the capabilities of our proposed methodologies is provided in Section \ref{s:exmple}.

\section{Preliminary}\label{s:preliminary}

\subsection{Graph Theory}
A graph $\mathcal{G}$ with a set of nodes or vertices $\mathcal{V}=\{1, \, 2, ..., \, N\}$, an edge set $\mathcal{E} \subset \mathcal{V} \times \mathcal{V}$, where an edge is defined by the pair of distinct vertices $\mathcal{G}: (i,j) \in \mathcal{E}$, is called directed if $(i,j) \in \mathcal{E}$ does not imply $(j,i) \in \mathcal{E}$. The adjacency matrix of $\mathcal{G}$ is defined as $\mathcal{A}=[a_{ij}] \in \mathbb{R}^{N \times N}$, where $a_{ij}=1$ when there is a link from node $j$ to $i$. The $\mathcal{N}_i$ is the set of neighbors of $i$ which consists of nodes that have an edge to the node $i$, $|\mathcal{N}_i|=d_i$, where $|\cdot|$ denotes the cardinality of the set. The in-degree matrix is defined as $D=\text{diag}(d_1, \, d_2, \dots, \, d_N)$, and the Laplacian matrix is then represented as $L=D-\mathcal{A}$.

\subsection{Model of MAS, Observers, and Consensus Protocols}
The state space representation of a MAS, consisting of $N$ agents, is governed by
\begin{equation}\label{e:agent_i}
\begin{split}
\dot{x}_i (t)&=A x_i (t)+Bu_i(t), \\
y_i(t)&= C x_i (t), \, \, \, \, i=1,...,N,
\end{split}
\end{equation}

\noindent where $x_i (t) \in \mathbb{R}^n$ represents the states of the agent $i$, $u_i (t) \in \mathbb{R}^m$ denotes the control input of the agent $i$, $y_i (t) \in \mathbb{R}^p$  denotes the output of the $i$-th agent, and $t$ denotes time. Matrices $(A, \, B, \, C)$ are of appropriate dimensions. It is assumed that the system $(A, \, B, \, C)$ is controllable and observable.

To design a consensus control protocol for the MAS in \eqref{e:agent_i}, one needs to first estimate the states of the system since only a few are assumed to be measurable. Consider the following observer-based consensus protocol for the system \eqref{e:agent_i} \cite{xu2013consensus}:
\begin{equation}\label{e:obs_i}
\begin{split}
\dot{\hat{x}}_i (t) =& A \hat{x}_i (t)+ B u_i(t)+H \sum_{j \in \mathcal{N}_i}(\zeta_\text{y}(t)+C\zeta_\text{x}(t)), \\
u_i (t)  =&  K \hat{x}_i (t),
\end{split}
\end{equation}

\noindent where $\hat{x}_i(t) \in \mathbb{R}^n$ denotes the state of the observer for the $i$-th agent, {$\zeta_\text{y}(t)=y_j (t)-y_i (t)$, $\zeta_\text{x}(t)=\hat{x}_i (t) -\hat{x}_j (t)$}, $H \in \mathbb{R}^{n \times p}$ is a full column rank observer gain matrix, and $K \in \mathbb{R}^{m\times n}$ is a control gain matrix that should be designed.

\begin{lem}[\cite{horn1991topics}]\label{lem:kroneker}
	Given the matrices $Q,\, W, \, M,$ and $Z$ with appropriate dimensions, the Kronecker product $\otimes$ satisfies the following conditions:
	
	\begin{itemize}
		\item[] $(i)\, (Q+W)\otimes M=Q\otimes M+W\otimes M$;
		\item[] $(ii) \, (Q\otimes W)(M\otimes Z) =(QM)\otimes (WZ).$
	\end{itemize}
	
\end{lem}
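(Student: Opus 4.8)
The plan is to establish both identities directly from the block-matrix definition of the Kronecker product, since each is a statement about how individual blocks combine. Recall that for a matrix $Q=[q_{ij}]$ and any matrix $M$, the product $Q\otimes M$ is the block matrix whose $(i,j)$ block equals $q_{ij}M$. Both parts then reduce to comparing the corresponding blocks on the two sides and invoking ordinary matrix arithmetic; no eigen-structure or analytic argument is needed, which is why the statement can be quoted as standard from \cite{horn1991topics}.

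For part $(i)$ I would first note that the two sides have identical dimensions, which is immediate because $Q$ and $W$ must be of the same size for $Q+W$ to be defined. The $(i,j)$ block of the left-hand side is $(q_{ij}+w_{ij})M$, whereas the $(i,j)$ block of the right-hand side is $q_{ij}M+w_{ij}M$. Distributivity of scalar multiplication over matrix addition makes these equal block by block, and the identity follows.

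For part $(ii)$, the mixed-product property, the main work is the block computation of $(Q\otimes W)(M\otimes Z)$. I would index so that the $(i,j)$ block of $Q\otimes W$ is $q_{ij}W$ and the $(j,k)$ block of $M\otimes Z$ is $m_{jk}Z$, with the inner dimensions chosen so that $QM$ and $WZ$ are defined. Treating the Kronecker products as block matrices and applying block multiplication, the $(i,k)$ block of the left-hand side is
\begin{equation*}
\sum_{j}(q_{ij}W)(m_{jk}Z)=\Big(\sum_{j}q_{ij}m_{jk}\Big)WZ=(QM)_{ik}\,(WZ),
\end{equation*}
where the scalars $q_{ij}m_{jk}$ factor out of the matrix product $WZ$. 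This is exactly the $(i,k)$ block of $(QM)\otimes(WZ)$, so the two sides agree.

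The only delicate point, and the step I would be most careful with, is the bookkeeping of dimensions and the consistent use of block indices: one must verify that the inner block sizes match so that block multiplication is legitimate, and that the scalar entries commute past the matrix factors $W$ and $Z$. Once the indexing is fixed, both claims follow by routine verification, so I expect no genuine obstacle beyond notational care.
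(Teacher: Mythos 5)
Your proof is correct: both the block-wise verification of distributivity in part $(i)$ and the block-multiplication computation for the mixed-product property in part $(ii)$ are the standard arguments, and your dimension bookkeeping (inner block sizes $q$ blocks of width $s$ matching $q$ blocks of height $s$) is sound. Note that the paper itself gives no proof of this lemma --- it is quoted as a known result from \cite{horn1991topics} --- and your argument is essentially the textbook proof found in that reference, so there is nothing to reconcile.
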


\section{Problem Formulation}\label{s:formulation}

\subsection{Cyber Attack on the Communication Links}
As described in \eqref{e:obs_i}, the agent $j \in \mathcal{N}_i$ transmits its observer state $\hat{x}_j(t)$ and output $y_j(t)$ to the agent $i$ as the pair $p_{ji}(t)=(\hat{x}_j(t), \, y_j(t))$. Since this communication is carried out through a network link, it would be prone and vulnerable to cyber attacks, as illustrated in Fig. \ref{fig:attack}. The adversary disguises their injected signals as legitimate information from the neighboring agents of their target such that the targeted agent $i$ only receives the cyber attack signals. This cyber attack can be considered as a man-in-the-middle type of attack \cite{man}.

Consequently, the malicious attacker adds signals $a_1^{ji}(t)=a_{\hat{x}}^{ji} (t)-\hat{x}_j(t)$ and $a_2^{ji}(t)=a_y^{ji} (t)-y_j(t)$ to $p_{ji}(t)$ so that the agent $i$ receives $p_{ji}^\text{a}(t)=(\hat{x}_j(t)+a_1^{ji}(t), \, y_j(t)+a_2^{ji}(t))=(a_{\hat{x}}^{ji} (t), \, a_y^{ji} (t))$ from the agent $j$. Two cyber attack signals $a_{\hat{x}}^{ji} (t) \in \mathbb{R}^n$ and $a_y^{ji} (t) \in \mathbb{R}^p$ are unknown and are to be designed based on the adversary's intentions.

\begin{figure}[!t]
	\centering
	\centerline{\includegraphics[width=\columnwidth]{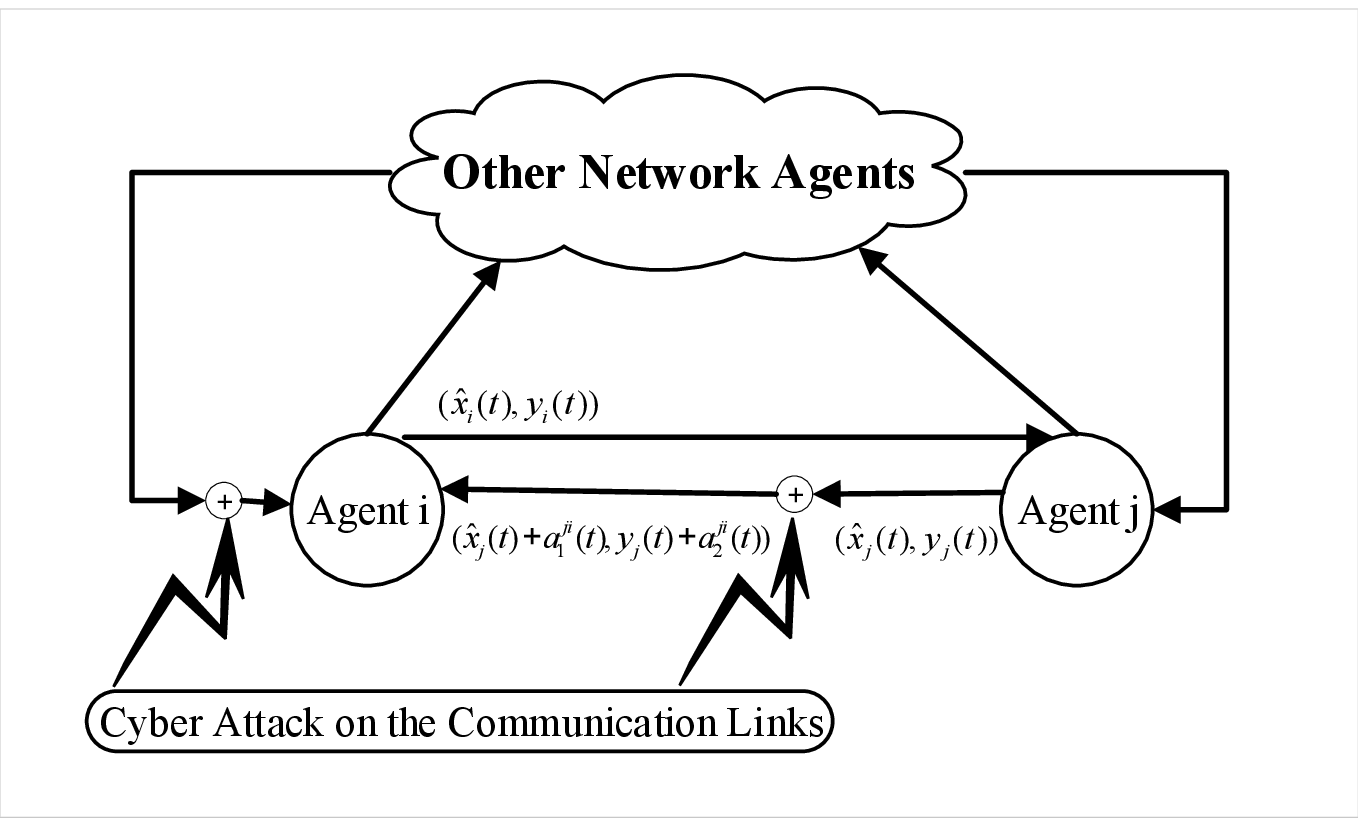}}
	\caption{A communication link cyber attack on the agent $i$. $a_1^{ji}(t)$ designates the cyber attack on the transmitted states of the observer, and $a_2^{ji}(t)$ designates the cyber attack on the transmitted output measurements.}
	\label{fig:attack}
\end{figure}

\begin{assumption}\label{assum:attacks}
	The adversary is capable of executing the worst case scenario attack in which all the incoming communication links of a given agent are under attack.
\end{assumption}

\begin{remark}
	Since the MAS have limited power resources and to make their communications more efficient they use the same communication protocols and encryption/decryption algorithms on all their communication channels \cite{communication}. Hence, if an adversary discovers a vulnerability for one channel of an agent, it is capable of attacking other channels as well.
\end{remark}

Given the observer-based consensus protocol \eqref{e:obs_i}, the closed-loop equations of the system \eqref{e:agent_i} and observer \eqref{e:obs_i} given the communication link cyber attacks can be reformulated as follows:
\begin{IEEEeqnarray}{rCl}
	\dot{x}_i (t)&=& A x_i (t)+BK \hat{x}_i (t), \label{e:agent_i_attack_1}\\
	\dot{\hat{x}}_i (t) &=& A \hat{x}_i (t)+ B K \hat{x}_i (t)+H \sum_{j \in \mathcal{N}_i}(\zeta_\text{y}(t)+q_{i}a_2^{ji}(t) \nonumber \\ &&+C(\zeta_\text{x}(t)-q_i a_1^{ji}(t))) \label{e:obs_i_attack_1},
\end{IEEEeqnarray}

\noindent for $i=1,...,N$ with $q_{i}=1$ if the communication links of the agent $i$ are under attack, and $q_{i}=0$, otherwise.

\subsection{Objectives}
The objectives of this paper are threefold. The first objective is to investigate conditions on the MAS and its Laplacian matrix under which the adversary can gain full controllability over the system in \eqref{e:agent_i_attack_1}. The adversary attempts to directly attack a subset of MAS agents and control the remaining agents as followers of the attacked agents. The second objective is to propose and investigate controllability measures that are based on graph of the MAS which is not fully controllable by the adversary and can be employed to inject attacks on agents that can be controlled through the directly attacked agents. And finally, the third objective is to study the possibility of executing zero dynamics attacks in the MAS governed by \eqref{e:agent_i_attack_1}.

\section{Controllability Cyber Attacks}\label{s:controllability}

\subsection{Conditions for Controllability}
In this subsection, controllability of the MAS \eqref{e:agent_i_attack_1} and its observer that is provided in \eqref{e:obs_i_attack_1} from the adversary's point of view is studied. Let us define
\begin{equation}\label{e:def1}
\begin{split}
\check{A} & = \begin{bmatrix}
A & 0 \\
0 & A
\end{bmatrix}, \, \check{B}=\begin{bmatrix}
0 & B \\
0 & B
\end{bmatrix}, \, \check{H}= \begin{bmatrix}
0 & 0 \\
-H & H
\end{bmatrix}, \\
\check{H}_\text{a} & =\begin{bmatrix}
0 \\
H
\end{bmatrix}, \, \check{K}=\begin{bmatrix}
K & 0 \\
0 & K
\end{bmatrix}, \, \check{C}= \begin{bmatrix}
C & 0 \\
0 & C
\end{bmatrix}.
\end{split}
\end{equation}

\noindent Using \eqref{e:def1}, the augmented dynamic of \eqref{e:agent_i_attack_1} and \eqref{e:obs_i_attack_1} can be derived as follows:
\begin{equation}\label{e:sys_aug_1}
\begin{split}
\dot{\check{x}}_i (t) =& (\check{A} +\check{B}\check{K})\check{x}_i(t)+ \check{H} \check{C} \sum_{j \in \mathcal{N}_i}(\check{x}_i (t)-\check{x}_j (t)) \\
&+\check{H} \check{C} \sum_{j \in \mathcal{N}_i} q_i\check{x}_j (t)+\check{H}_a q_{i}a_{i} (t),
\end{split}
\end{equation}

\noindent where $\check{x}_i(t)= [x_i(t)^\top\, \hat{x}_i(t)^\top]^\top$, and $a_{i} (t)= \sum_{j \in \mathcal{N}_i} a_y^{j i} (t)-Ca_{\hat{x}}^{j i} (t)$.

One can easily partition the agents into two groups, namely the first group contains agents that are directly under attack and the second group consists of agents that receive information from their neighboring agents without any manipulation by the adversary. Consequently, one has $x_\text{f}(t)=[\check{x}_1(t)^\top,\, \check{x}_2(t)^\top, ..., \, \check{x}_{{N}_\text{f}}(t)^\top]^\top$, which designates the state of those agents that are not directly under attack and act as followers. Second, $x_\text{a}(t)=[\check{x}_{N_\text{f} +1}(t)^\top,\, \check{x}_{{N}_\text{f} +2}(t)^\top, ..., \, \check{x}_{N}(t)^\top]^\top$, which designates the directly attacked agents. The subscripts ``$\text{f}$" and ``$\text{a}$" are used to denote followers and attacked agents, respectively. ${{N}}_\text{f}$ denotes the number of followers and ${{{N}}_\text{a}}$ denotes the number of attacked agents, where ${{N}}={{N}}_\text{f}+{{N}}_\text{a}$. Without loss of generality, we assume that the first ${N}_\text{f}$ agents are not under attack. Consequently, the Laplacian matrix can be partitioned into the following form:
\begin{equation}\label{e:lap}
L=\begin{bmatrix}
L_\text{f} & l_{\text{fa}} \\
l_{\text{af}} & L_\text{a}
\end{bmatrix},
\end{equation}

\noindent where $L_\text{f} \in \mathbb{R}^{N_\text{f} \times N_\text{f}}$ is a grounded Laplacian matrix \cite{pirani2014spectral}, $L_\text{a} \in \mathbb{R}^{N_\text{a} \times N_\text{a}}$, $l_{\text{fa}} \in \mathbb{R}^{N_\text{f} \times N_\text{a}}$, and $l_{\text{fa}} \in \mathbb{R}^{N_\text{a} \times N_\text{f}}$.

The dynamics of all $N$ agents can be expressed as follows:
\begin{IEEEeqnarray}{rCl}
	\dot{x}_\text{a}(t) &=& A_\text{a} x_\text{a}(t)+B_\text{a} a (t), \label{e:sys1_attacked} \\
	\dot{x}_\text{f}(t) &=& A_\text{f} x_\text{f}(t) +A_\text{fa} x_\text{a}(t), \label{e:sys1_followers}
\end{IEEEeqnarray}

\noindent where $A_\text{a} =I_{N_\text{a}} \otimes (\check{A}+\check{B}\check{K})+D_\text{a} \otimes \check{H} \check{C}$, $A_\text{f}=I_{N_\text{f}} \otimes (\check{A}+\check{B}\check{K})+L_\text{f}\otimes \check{H}\check{C}$, $D_\text{a}=\text{diag}(d_{N_\text{f}+1},d_{N_\text{f}+2}, ... ,\, d_{N})$, $B_\text{a}=I_{N_\text{a}} \otimes \check{H}_\text{a}$, $A_\text{fa}=l_{\text{fa}}\otimes \check{H}\check{C}$, and $a (t)=[{a_{N_\text{f} +1}}(t)^\top,\, {a_{N_\text{f} +2}}(t)^\top, ..., \, {a_{N}}(t)^\top]^\top$. The dynamics of directly attacked agents \eqref{e:sys1_attacked} and the followers \eqref{e:sys1_followers} can be augmented in the following form:
\begin{equation}\label{e:sys}
\begin{bmatrix}
\dot{x}_\text{a}(t) \\
\dot{x}_\text{f}(t)
\end{bmatrix}= \begin{bmatrix}
A_\text{a} & 0 \\
A_\text{fa} & A_\text{f}
\end{bmatrix} \begin{bmatrix}
{x}_\text{a}(t) \\
{x}_\text{f}(t)
\end{bmatrix}+ \begin{bmatrix}
B_\text{a} \\
0
\end{bmatrix} a(t) .
\end{equation}

We are now in a position to state our first definition as well as the first result of this section.
\begin{definition}\label{def:control}
	The MAS described in \eqref{e:sys} is controllable by the adversary if, for every $x_\text{a}^*$ and $x_\text{f}^*$ and every finite $T>0$, there exists an attack signal $a(t)$, $0<t<T$, such that the MAS states do transition from $x_\text{a}(0)=0$ and $x_\text{f}(0)=0$ to $x_\text{a}(T)=x_\text{a}^*$ and $x_\text{f}(T)=x_\text{f}^*$, respectively.
\end{definition}

\begin{theorem}\label{th:iff}
	The adversary is capable of controlling the system \eqref{e:sys} if the pairs $(\check{A}+\check{B}\check{K}+d_i \check{H} \check{C},\check{H}_\text{a})$ for $i=N_{\text{f}}+1,..., \, N$, and $( A_\text{f},A_\text{fa})$ are controllable, $\text{rank}(\sum_{k=1}^{2nN-1}M_k Q_k^\top w_1)$ is either equal to $ 2nN_\text{f}$ if $N_\text{f} \leq N_\text{a}$ or equal to $ 2nN_\text{a}$ if $N_\text{a} < N_\text{f}$, where $Q=[Q_1, \dots , Q_{(2n N-1)}]$, $Q_k=A_\text{a}^k B_\text{a}$ for $k=1,\dots, \, 2n N-1$, $Q_0=B_\text{a}$, $M=[M_1, \dots, M_{(2n N-1)}]$, $M_k=\sum_{z=0}^{k-1} A_\text{f}^z A_\text{fa} Q_{k-1-z}$, columns of $M_k$ are nonzero, and $w_1 \in \mathbb{R}^{(2n N_\text{a})\times (2n N_\text{a})}$ is a matrix that satisfies $w_1 \in \text{ker}(B_\text{a}^\top)$.	
\end{theorem}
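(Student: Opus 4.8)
The plan is to check the Kalman controllability rank condition for the pair $\bigl(\mathcal{A},\mathcal{B}\bigr)$ with $\mathcal{A}=\begin{bmatrix}A_\text{a}&0\\A_\text{fa}&A_\text{f}\end{bmatrix}$ and $\mathcal{B}=\begin{bmatrix}B_\text{a}\\0\end{bmatrix}$, i.e. to show that the controllability matrix $\mathcal{C}=[\mathcal{B},\,\mathcal{A}\mathcal{B},\dots,\mathcal{A}^{2nN-1}\mathcal{B}]$, which has $2nN$ rows, attains rank $2nN$. First I would exploit the block lower-triangular form of $\mathcal{A}$: an induction (expanding the Kronecker terms of $A_\text{a}$ and $A_\text{f}$ by Lemma \ref{lem:kroneker}) gives $\mathcal{A}^k=\begin{bmatrix}A_\text{a}^k&0\\\sum_{z=0}^{k-1}A_\text{f}^z A_\text{fa}A_\text{a}^{k-1-z}&A_\text{f}^k\end{bmatrix}$, so that $\mathcal{A}^k\mathcal{B}=\bigl[\,Q_k^\top,\ M_k^\top\,\bigr]^\top$ with $Q_k=A_\text{a}^kB_\text{a}$ and $M_k=\sum_{z=0}^{k-1}A_\text{f}^zA_\text{fa}Q_{k-1-z}$, exactly the quantities in the statement. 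Stacking the columns yields the block factorization $\mathcal{C}=\begin{bmatrix}Q_0&Q\\0&M\end{bmatrix}$ with $Q=[Q_1,\dots,Q_{2nN-1}]$, $M=[M_1,\dots,M_{2nN-1}]$, and $Q_0=B_\text{a}$, so the problem reduces to a rank statement about these two blocks.

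Next I would pin down the rank of each block separately. Because $A_\text{a}=I_{N_\text{a}}\otimes(\check{A}+\check{B}\check{K})+D_\text{a}\otimes\check{H}\check{C}$ with $D_\text{a}$ diagonal and $B_\text{a}=I_{N_\text{a}}\otimes\check{H}_\text{a}$, Lemma \ref{lem:kroneker} shows that $A_\text{a}$ and $B_\text{a}$ are simultaneously block diagonal with $i$-th blocks $\check{A}+\check{B}\check{K}+d_i\check{H}\check{C}$ and $\check{H}_\text{a}$; the attacked subsystem therefore decouples into $N_\text{a}$ independent single-agent pairs, and the first hypothesis makes every one of them controllable. Hence $[Q_0,\dots,Q_{2nN-1}]$ contains the full Kalman matrix of $(A_\text{a},B_\text{a})$ and has row rank $2nN_\text{a}$. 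In parallel, controllability of $(A_\text{f},A_\text{fa})$, together with the assumption that the columns of each $M_k$ are nonzero, is used to guarantee that the reachable directions propagated through the cascade $A_\text{fa}$ fill the entire follower space, i.e. $M$ has row rank $2nN_\text{f}$.

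Finally I would glue the blocks together by ruling out a nonzero left null vector of $\mathcal{C}$. Such a vector $[\xi^\top,\eta^\top]$ obeys $\xi^\top B_\text{a}=0$ (from the $k=0$ column) and $\xi^\top Q_k+\eta^\top M_k=0$ for $k\ge1$. The matrix $w_1$, whose columns lie in $\text{ker}(B_\text{a}^\top)$, is precisely the device that isolates the part of the attacked reachable space not already supplied by the column block $\mathcal{B}=[B_\text{a}^\top,0]^\top$: right-multiplying the relations $\xi^\top Q_k+\eta^\top M_k=0$ by $Q_k^\top w_1$ and summing produces a single identity of the form $\xi^\top(QQ^\top)w_1+\eta^\top\bigl(\sum_{k=1}^{2nN-1}M_kQ_k^\top w_1\bigr)=0$. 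After showing that the $\xi$-term is annihilated by the constraint $\xi^\top B_\text{a}=0$ together with $B_\text{a}^\top w_1=0$, the remaining identity $\eta^\top\bigl(\sum_kM_kQ_k^\top w_1\bigr)=0$ combined with the stated rank of $\sum_kM_kQ_k^\top w_1$ (full row rank $2nN_\text{f}$ when $N_\text{f}\le N_\text{a}$, full column rank $2nN_\text{a}$ when $N_\text{a}<N_\text{f}$) forces $\eta=0$, and then $\xi=0$ follows from controllability of $(A_\text{a},B_\text{a})$. Thus $\text{rank}(\mathcal{C})=2nN_\text{a}+2nN_\text{f}=2nN$, which is controllability in the sense of Definition \ref{def:control}.

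I expect the gluing step to be the main obstacle. The block-triangular coupling can resonate whenever $A_\text{a}$ and $A_\text{f}$ share modes, so the crux is to justify that the cross-Gramian-type matrix $\sum_kM_kQ_k^\top w_1$ faithfully records the overlap between the attacker-reachable and follower-reachable subspaces, and in particular to verify that the $\xi$-dependent term $\xi^\top(QQ^\top)w_1$ genuinely drops out under $\xi^\top B_\text{a}=0$ and $B_\text{a}^\top w_1=0$. Making this separation rigorous, and confirming that the nonzero-column hypothesis on the $M_k$ plus the exact minimal-rank value are exactly what exclude cascade resonance, is where the real work lies.
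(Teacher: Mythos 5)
Your first two steps track the paper's own proof essentially verbatim: the paper forms the same block controllability matrix $\mathcal{C}^*=\begin{bmatrix} Q_0 & Q \\ 0 & M\end{bmatrix}$, reduces right invertibility of $[Q_0,\,Q]$ to controllability of $(A_\text{a},B_\text{a})$ and then, via the diagonal structure of $D_\text{a}$ and the Kronecker form of $B_\text{a}$, to the per-agent pairs $(\check{A}+\check{B}\check{K}+d_i\check{H}\check{C},\check{H}_\text{a})$; and it handles the follower block through the factorization $M=M^*Q^*$ with $M^*=[A_\text{fa},\ A_\text{f}A_\text{fa},\ \dots]$, using the nonzero-column hypothesis on the $M_k$ to get $\text{rank}(M)=\text{rank}(M^*)$ and controllability of $(A_\text{f},A_\text{fa})$ for full row rank. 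Up to that point you and the paper agree.

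The gluing step, however, contains a genuine gap, and it is not merely ``real work left to do'' --- as you have set it up, it cannot be completed. From the left-null-vector relations $\xi^\top Q_k+\eta^\top M_k=0$ you form $\sum_{k}(\xi^\top Q_k+\eta^\top M_k)Q_k^\top w_1=0$, but this identity is a tautology: every summand is $0\cdot Q_k^\top w_1$. Equivalently, since $\xi^\top Q_k=-\eta^\top M_k$ for each $k$, the $\xi$-term satisfies $\xi^\top\bigl(\sum_k Q_kQ_k^\top\bigr)w_1=-\eta^\top S$ identically, where $S=\sum_k M_kQ_k^\top w_1$. So ``showing the $\xi$-term drops out'' is not a separate verification --- it is exactly the conclusion $\eta^\top S=0$ you are trying to reach; the argument is circular. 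Moreover, the justification you propose is false on its own terms: $\xi^\top B_\text{a}=0$ and $B_\text{a}^\top w_1=0$ constrain only the $k=0$ factors, while $\xi^\top Q_kQ_k^\top w_1=(\xi^\top A_\text{a}^kB_\text{a})\,(B_\text{a}^\top(A_\text{a}^\top)^k w_1)$ has no reason to vanish for $k\ge 1$ (indeed, if $\xi^\top A_\text{a}^kB_\text{a}=0$ for all $k$, controllability of $(A_\text{a},B_\text{a})$ already forces $\xi=0$ and there is nothing left to prove). There is a second defect: when $N_\text{a}<N_\text{f}$ the hypothesis only gives $S$ full \emph{column} rank, and then $\eta^\top S=0$ does not force $\eta=0$, since the left kernel of $S$ has dimension $2n(N_\text{f}-N_\text{a})>0$; so even a repaired annihilation step would not close that case. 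The paper's proof avoids this trap by arguing differently at this stage: it requires that for every $w_2$ the rows of $w_1^\top Q$ not be linearly dependent on those of $w_2^\top M$, reduces this to the kernel condition $\text{ker}(M)\cap\text{ker}(w_1^\top Q)=\{0\}$, and then certifies that condition by the stated rank of $S$, viewing $S$ as the product of $M$ with the stacked blocks $Q_k^\top w_1$ --- i.e., $S$ is used as a certificate that $M$ and $w_1^\top Q$ share no common right kernel, not as the residue of summed null-vector relations. To repair your write-up you would need to replace the summation argument with a condition of this kernel-intersection type (or some other mechanism that genuinely decouples $\xi$ from $\eta$).
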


\begin{proof}
Let us define
	\begin{equation}\label{e:def:giant}
	A^*=\begin{bmatrix}
	A_\text{a} & 0 \\
	A_\text{fa} & A_\text{f}\end{bmatrix}, \, B^*=\begin{bmatrix}
	B_\text{a} \\
	0
	\end{bmatrix}.
	\end{equation}
	The controllability matrix of the system \eqref{e:sys}, $\mathcal{C}^*=	\begin{bmatrix}
	B^*, & A^* B^*, & \dots & {(A^*)}^{2n N-1} B^*
	\end{bmatrix}$, can be expressed in the following form:
	\begin{equation*}
	\mathcal{C}^* =	\begin{bmatrix}
	Q_0 & Q_1 & \cdots & Q_{(2n N-1)} \\
	0 & M_1 & \cdots & M_{(2n N-1)}
	\end{bmatrix}.
	\end{equation*}
	
	\noindent For the system \eqref{e:sys} to be controllable, $\mathcal{C}^*$ should be of full row rank. Hence, controllability is achieved if $[Q_0, \, Q]$ and $M$ are right invertible and rows of $Q$ and $ M$, under some conditions that are provided below, are linearly independent.
	
	From the definition of $Q_0$ and $Q$, one can conclude the right invertibility of $[Q_0, \, Q]$ is equivalent to the pair $(A_\text{a}, B_\text{a})$ being controllable. For this pair the matrix $D_\text{a}$ is diagonal, therefore, $A_\text{a}=\text{blockdiag}((\check{A}+\check{B}\check{K}+d_{N_\text{f}+1} \check{H} \check{C}),..., \, (\check{A}+\check{B}\check{K}+d_{N} \check{H} \check{C}))$ is a block diagonal matrix. The operator $\text{blockdiag}(\cdot)$ denotes a block diagonal matrix. In addition, $I_{N_\text{a}} \otimes \check{H}_\text{a}=\text{bockdiag}(\check{H}_\text{a},..., \, \check{H}_\text{a})$ is block diagonal. Hence, the controllability condition can be studied for each attacked agent separately.
	
	The matrix $M=[M_1, \dots, M_{(2n N-1)}]$ can be written as the product of two matrices, namely $M^*$ and $Q^*$, i.e., $M=M^* Q^*$, where
	\begin{equation*}
	\begin{split}
	M^*= & \begin{bmatrix}
	A_\text{fa} & A_\text{f} A_\text{fa} & \dots & (A_\text{f})^{2n N-2} A_\text{fa}
	\end{bmatrix}, \, \\
	Q^*= &\begin{bmatrix}
	B_\text{a} & A_\text{a} B_\text{a} & A_\text{a}^2 B_\text{a} & \cdots & A_\text{a}^{(2n N-2)} B_\text{a} \\
	0 & B_\text{a} & A_\text{a} B_\text{a} & \cdots & A_\text{a}^{(2n N-3)}B_\text{a} \\
	\vdots & \vdots & \ddots & & \vdots \\
	\vdots & \vdots &   & B_\text{a} & A_\text{a} B_\text{a} \\
	0 & 0 & \cdots & 0 & B_\text{a}
	\end{bmatrix}
	\end{split}.
	\end{equation*}
	
	\noindent The rows of the matrices $M_k=\sum_{z=0}^{k-1} A_\text{f}^z A_\text{fa} Q_{k-1-z}$, $k=1, \dots,2n N-1$, are equal to the rows of $M^*$ multiplied by the columns of $Q^*$. The matrices $M_k$ not having any zero column is equivalent to them not having any basis of $\text{ker}(M^*)$ in common with basis of $\text{Im}(Q^*)$. In other words, $\text{ker}(M^*) \cap \text{Im}(Q^*)={0}$. This condition along with the fact that the number of rows of $M^*$ is smaller than the dimensions of $Q^*$, in turn imply that $\text{rank}(M)=\text{rank}(M^*)$. Consequently, for $M$ to be right invertible, $M^*$ should be of full row rank, which is satisfied if the pair $(A_\text{f}, A_\text{fa})$ is controllable.
	
	Considering $w_1 \in \text{ker}(B_\text{a}^\top)$ and an appropriate matrix $w_2$, one has
	\begin{equation}\label{e:QM}
	\begin{bmatrix}
	w_1^\top & 0 \\
	0 & w_2^\top
	\end{bmatrix}\begin{bmatrix}
	B_\text{a} & Q \\
	0 & M
	\end{bmatrix} = \begin{bmatrix}
	0 & w_1^\top Q \\
	0 & w_2^\top M
	\end{bmatrix}.
	\end{equation}
	\noindent Rows of $w_1^\top Q$ and $w_2^\top M$ should not be linearly dependent to have a right invertible $\mathcal{C}^*$. This is satisfied if $w_1^\top Q \neq w_2^\top M$ for every $w_2$. This implies there does not exist any $w_2^\top$ such that the rows of $w_1^\top Q$ and $w_2^\top M$ are linearly dependent if $\text{ker}(M) \nsubseteq \text{ker}(w_1^\top Q)$. This condition is satisfied if $\text{ker}(M) \cap \text{ker}( w_1^\top Q)=0$. For the latter condition to be satisfied the following matrix
	\begin{equation}
	\begin{split}
	S= & \begin{bmatrix}
	M_1 & \dots & M_{(2n N-1)}
	\end{bmatrix} \times \begin{bmatrix}
	Q_1^\top w_1 \\
	\vdots \\
	Q_{(2n N-1)}^\top w_1
	\end{bmatrix} \\
	=& \sum_{k=1}^{2nN-1}M_k Q_k^\top w_1 ,
	\end{split}
	\end{equation}
	
	\noindent should be either full row rank if $N_\text{f} \leq N_\text{a}$ or full column rank if $N_\text{a} < N_\text{f}$. This completes the proof of the theorem.
\end{proof}

\begin{remark}
	Since the conditions in Theorem \ref{th:iff} are difficult to verify the adversary may not be able to gain control over the entire network as described in Definition \ref{def:control}. In such a scenario the adversary is capable of injecting its attack signals to the directly targeted agents and control the followers through them. In this type of attack, states of the directly attacked agents are used as  control inputs to the followers.
\end{remark}

The definition of controllability of MAS  followers is provided below.

\begin{definition}\label{def:control2}
	The followers \eqref{e:sys1_followers} are controllable through the directly attacked agents by the adversary if for every $x_\text{f}^*$ and every finite $T>0$, there exists a proper $x_\text{a}(t)$, $0<t<T$, such that the state transitions can be accomplished from $x_\text{f}(0)=0$ to $x_\text{f}(T)=x_\text{f}^*$.
\end{definition}

\begin{assumption}\label{assum:1}
	The set of eigenvectors of $L_\text{f}$ span $\mathbb{R}^{N_\text{f}}$.
\end{assumption}

It should be noted that the grounded Laplacian matrix in case of a directed graph is not necessarily diagonalizable. For example, consider the Laplacian matrix $L=[$$1$, $0$, $0$, $-1$; $-1$, $1$, $0$, $0$; $0$, $-1$, $1$, $0$; $0$, $0$, $-1$, $1$$]$ and its corresponding grounded Laplacian matrix $L_\text{f}=[$$1$, $0$, $0$; $-1$, $1$, $0$; $0$, $-1$, $1$$]$, where the agent 4 is directly under cyber attack. The algebraic multiplicity of the eigenvalue of $L_\text{f}$, namely $\lambda=1$, is $3$, however, its geometric multiplicity is $1$,  implying that $L_\text{f}$ is not diagonalizable. Since in the Theorem \ref{th:conses1} provided below one requires $L_\text{f}$ to be diagonalizable, the above Assumption \ref{assum:1} is given.
\begin{proposition}[\cite{brogan1991modern}]\label{pro:1}
	The system $\dot{x}_i(t)=Ax_i (t)+Bu_i(t)$ is controllable if and only if $\forall$ $v_k$, $k=1,..., \, n$, where $v_k$ is the $k$-th eigenvector of $A$, $v_k \notin \text{ker}(B^T)$.
\end{proposition}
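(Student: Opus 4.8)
The plan is to prove the equivalence through the Kalman controllability matrix $\mathcal{C}=[B,\,AB,\,\dots,\,A^{n-1}B]$ and its left null space, reading the condition $v_k\notin\text{ker}(B^\top)$ as $v_k^\top B\neq 0$ for the (left) eigenvectors $v_k$ satisfying $v_k^\top A=\lambda_k v_k^\top$. In that form the claim is exactly the statement that no eigenvector of $A^\top$ is orthogonal to the columns of $B$, i.e. the Popov--Belevitch--Hautus eigenvector test, and I would establish both implications by contraposition. An equivalent route would be to first pass through the PBH rank test, $\text{rank}[\lambda I - A,\,B]=n$ for all $\lambda$, but the null-space argument is more self-contained.

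For the forward direction (controllability $\Rightarrow$ every eigenvector lies outside $\text{ker}(B^\top)$), I would suppose to the contrary that some eigenvector $v_k$ satisfies $v_k^\top B=0$. Using $v_k^\top A=\lambda_k v_k^\top$ repeatedly gives $v_k^\top A^{j}B=\lambda_k^{\,j}\,v_k^\top B=0$ for every $j\geq 0$, so $v_k^\top\mathcal{C}=0$. Hence $\mathcal{C}$ has a nonzero left null vector and $\text{rank}(\mathcal{C})<n$, contradicting controllability. This direction is short and purely algebraic.

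For the converse (the eigenvector condition $\Rightarrow$ controllability), I would again argue by contraposition: assume $(A,B)$ is not controllable, so $\mathcal{W}=\{w:w^\top\mathcal{C}=0\}$ is nonzero. The key step is to show $\mathcal{W}$ is $A^\top$-invariant. For $w\in\mathcal{W}$ one has $w^\top A^{j}B=0$ for all $j\geq 0$ (for $0\le j\le n-1$ by the definition of $\mathcal{W}$, and for $j\geq n$ by the Cayley--Hamilton theorem), whence $(A^\top w)^\top A^{j}B=w^\top A^{j+1}B=0$ for $0\le j\le n-1$, so $A^\top w\in\mathcal{W}$. A nonzero $A^\top$-invariant subspace contains an eigenvector of $A^\top$, i.e. a left eigenvector $v_k$ of $A$; since $v_k\in\mathcal{W}$ forces $v_k^\top B=0$, this $v_k$ lies in $\text{ker}(B^\top)$, the required contradiction.

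I expect the converse to be the main obstacle, and within it the two facts that (i) $\mathcal{W}$ is $A^\top$-invariant and (ii) every nonzero invariant subspace of $A^\top$ contains an eigenvector — the latter forcing the argument into $\mathbb{C}$ even though $A$ and $B$ are real. A secondary point worth flagging is the eigenvector convention: the test concerns left eigenvectors of $A$ (equivalently right eigenvectors of $A^\top$), so if ``eigenvector of $A$'' in the statement is read literally as a right eigenvector, the claim must be understood up to this standard transpose identification, after which the argument above applies verbatim.
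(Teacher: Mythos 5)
The paper offers no proof of this proposition at all: it is imported verbatim, with citation, from \cite{brogan1991modern}, so there is nothing in the paper to compare against line by line. Your proof therefore stands on its own, and it is correct. It is the standard derivation of the Popov--Belevitch--Hautus eigenvector test: the forward direction by propagating $v_k^\top A^j B=\lambda_k^j v_k^\top B=0$ across the Kalman matrix, and the converse by showing that the left null space $\mathcal{W}$ of the controllability matrix is $A^\top$-invariant (using Cayley--Hamilton to extend $w^\top A^jB=0$ beyond $j=n-1$) and hence contains an eigenvector of $A^\top$. Both steps are sound, including your remark that the existence of an eigenvector inside a nonzero invariant subspace forces the argument over $\mathbb{C}$.

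Your closing caveat about the eigenvector convention is not a pedantic aside but essential, and you were right to flag it rather than bury it. Read literally, with $v_k$ a right eigenvector of $A$, the proposition is false: take $A=\begin{bmatrix} 1 & 1 \\ 0 & 2 \end{bmatrix}$, $B=\begin{bmatrix} 1 \\ 0 \end{bmatrix}$; the pair is uncontrollable since $[B,\,AB]$ has rank $1$, yet both right eigenvectors $[1,\,0]^\top$ and $[1,\,1]^\top$ satisfy $B^\top v\neq 0$. Only the left-eigenvector reading (eigenvectors of $A^\top$) makes the statement true, and that is in fact how the paper uses Proposition \ref{pro:1} inside the proof of Theorem \ref{th:conses1}: the rows of $P$ there satisfy $PL_\text{f}=\Lambda P$, i.e., they are left eigenvectors of $L_\text{f}$ (despite being called right eigenvectors), and the test ``no zero row of $Pl_\text{fa}$'' is exactly the condition $p_j\notin\ker(l_\text{fa}^\top)$ for each such row. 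So your proof, under the transpose identification you state, is precisely the result the paper needs.
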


\begin{theorem}\label{th:conses1}
	Under Assumption \ref{assum:1} the adversary is capable of controlling the system \eqref{e:sys1_followers} through the directly attacked agents \eqref{e:sys1_attacked} according to the Definition \ref{def:control2} if and only if the pairs $(\check{A}+\check{B}\check{K}+d_i \check{H} \check{C},\check{H}_\text{a})$,  $(L_\text{f},l_{\text{fa}})$, and $(\check{A}+\check{B}\check{K}+\lambda_j \check{H} \check{C},\check{H}\check{C})$ are controllable for $i=N_{\text{f}}+1,..., \, N$ and $j=1,..., \, N_\text{f}$, where $\lambda_j$ is the $j$th eigenvalue of $L_f$.
\end{theorem}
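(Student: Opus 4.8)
The plan is to recognize that the follower subsystem \eqref{e:sys1_followers} is a cascade: the attacked agents $x_\text{a}(t)$ act as the control input entering the followers through $A_\text{fa}=l_\text{fa}\otimes\check{H}\check{C}$, but $x_\text{a}(t)$ is itself only reachable via the genuine attack signal $a(t)$ through \eqref{e:sys1_attacked}. Controllability ``through the directly attacked agents'' in the sense of Definition \ref{def:control2} therefore requires two things to hold simultaneously: first, that the map $a(t)\mapsto x_\text{a}(t)$ can realize a sufficiently rich set of inputs into the followers, which is exactly controllability of the pair $(A_\text{a},B_\text{a})$; and second, that the effective pair $(A_\text{f},A_\text{fa})$ be controllable so that any follower target $x_\text{f}^*$ is reachable once $x_\text{a}$ is treated as a free input. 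The first step is thus to make this cascade decomposition precise and argue that the overall ``through-the-attacked-agents'' controllability is equivalent to the conjunction of controllability of $(A_\text{a},B_\text{a})$ and $(A_\text{f},A_\text{fa})$.

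Next I would reduce each of those two block-structured pairs to per-agent and per-eigenvalue conditions using the Kronecker structure from \eqref{e:def1}. For $(A_\text{a},B_\text{a})$, since $D_\text{a}$ is diagonal and $B_\text{a}=I_{N_\text{a}}\otimes\check{H}_\text{a}$, the matrix $A_\text{a}=I_{N_\text{a}}\otimes(\check{A}+\check{B}\check{K})+D_\text{a}\otimes\check{H}\check{C}$ is block diagonal, exactly as established in the proof of Theorem \ref{th:iff}; hence this pair is controllable if and only if each agent pair $(\check{A}+\check{B}\check{K}+d_i\check{H}\check{C},\check{H}_\text{a})$, $i=N_\text{f}+1,\dots,N$, is controllable. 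For $(A_\text{f},A_\text{fa})$ with $A_\text{f}=I_{N_\text{f}}\otimes(\check{A}+\check{B}\check{K})+L_\text{f}\otimes\check{H}\check{C}$ and $A_\text{fa}=l_\text{fa}\otimes\check{H}\check{C}$, I would invoke Assumption \ref{assum:1} so that $L_\text{f}$ is diagonalizable: writing $L_\text{f}=V\Lambda V^{-1}$ and applying the change of coordinates $V^{-1}\otimes I_{2n}$ together with Lemma \ref{lem:kroneker}(ii), the pair block-diagonalizes into $N_\text{f}$ decoupled modal subsystems. In the $j$th mode the dynamics matrix becomes $\check{A}+\check{B}\check{K}+\lambda_j\check{H}\check{C}$ and the input matrix becomes (a scalar multiple of) $\check{H}\check{C}$ weighted by the $j$th row of $V^{-1}l_\text{fa}$.

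The final step is to handle these two mixed conditions—the modal dynamics being controllable \emph{and} the transformed input channel being nonzero—by separating them into the stated hypotheses $(\check{A}+\check{B}\check{K}+\lambda_j\check{H}\check{C},\check{H}\check{C})$ controllable for each $j$, and $(L_\text{f},l_\text{fa})$ controllable. The pair $(L_\text{f},l_\text{fa})$ controllable is precisely the PBH/eigenvector condition of Proposition \ref{pro:1} applied to $L_\text{f}$, guaranteeing that no left eigenvector $v_j$ of $L_\text{f}$ lies in $\ker(l_\text{fa}^\top)$, i.e. that $v_j^\top l_\text{fa}\neq 0$ so the mode is actually excited; the per-mode pair $(\check{A}+\check{B}\check{K}+\lambda_j\check{H}\check{C},\check{H}\check{C})$ controllable then ensures the inner state of each decoupled mode is reachable. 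Combining the modal controllability across all $j$ recovers controllability of $(A_\text{f},A_\text{fa})$, and assembling with the $(A_\text{a},B_\text{a})$ condition completes the ``if and only if''. The main obstacle I anticipate is the rigorous justification that the two coupled requirements genuinely factor into these two independent conditions: a Kronecker pair being controllable does not in general split cleanly when both factors carry information, so I would need a careful PBH argument showing that for each eigenvalue $\lambda_j$ the rank test on $[\check{A}+\check{B}\check{K}+\lambda_j\check{H}\check{C}-sI,\ \check{H}\check{C}]$ decouples from the graph-level rank test on $[L_\text{f}-\mu I,\ l_\text{fa}]$, which is where the diagonalizability granted by Assumption \ref{assum:1} is essential.
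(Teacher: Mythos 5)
Your proposal follows essentially the same route as the paper's own proof: the cascade view in which $x_\text{a}(t)$ serves as the followers' input (so that controllability of $(A_\text{a},B_\text{a})$ reduces block-diagonally to the per-agent pairs $(\check{A}+\check{B}\check{K}+d_i\check{H}\check{C},\check{H}_\text{a})$, exactly as in Theorem \ref{th:iff}), followed by the modal decomposition of $(A_\text{f},A_\text{fa})$ under Assumption \ref{assum:1} via a similarity transformation built from the eigenvectors of $L_\text{f}$, with Proposition \ref{pro:1} converting controllability of $(L_\text{f},l_\text{fa})$ into nonvanishing rows of the transformed input matrix and the per-mode pairs $(\check{A}+\check{B}\check{K}+\lambda_j\check{H}\check{C},\check{H}\check{C})$ handling the inner dynamics. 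The obstacle you flag at the end---that per-mode controllability plus nonzero input rows does not automatically give controllability of the assembled block-diagonal system when eigenvalues $\lambda_j$ repeat, so the conditions do not factor cleanly---is genuine, but the paper's proof glosses over it in exactly the same way (its necessity and sufficiency arguments treat each mode in isolation, even though its own numerical example has $\lambda_1=\lambda_2=1$), so your plan is no less complete than the published argument.
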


\begin{proof}
	In this attack scenario, the adversary uses $x_\text{a}(t)$ as the control input to the followers. Hence, the adversary should be capable of setting $x_\text{a}(t)$ to its desired value, which can be achieved if \eqref{e:sys1_attacked} is controllable. Consequently, the followers in \eqref{e:sys1_followers} should be controllable through $x_\text{a}(t)$. Since \eqref{e:sys1_attacked} is considered to be controllable, the adversary is capable of designing $a(t)$ such that $x_\text{a}(t)$ tracks its desired trajectory (see Theorem 5.2.5 and Corollary 5.2.6 in \cite{willems1997introduction}).

	The controllability condition of the pair $(A_\text{a},B_\text{a})$ was studied in Theorem \ref{th:iff}. Controllability of $( A_\text{f}, A_{\text{fa}})$ indicates that the followers are controlled via the state of the attacked agents, $x_\text{a}(t)$. In view of Assumption \ref{assum:1}, there always exists an invertible matrix $P$, with its rows representing $N_\text{f}$ right eigenvectors of $L_\text{f}$, such that $PL_\text{f} P^{-1}=\text{diag}(\lambda_1,...,\lambda_{N_\text{f}})$. Using the similarity transformation $P \otimes I_n$, \eqref{e:sys1_followers} can be rewritten as
	\begin{equation}\label{e:th2}
	\dot{x}^\text{p}_\text{f} (t)=(P \otimes I_n) A_\text{f} (P^{-1} \otimes I_n) x_\text{f}^\text{p} (t)+(P \otimes I_n)(l_{\text{fa}}\otimes \check{H}\check{C})x_\text{a}(t),
	\end{equation}
	
	\noindent where $x_\text{f}^\text{p} (t)=(P \otimes I_n)x_\text{f}(t)$. Since $P$ is nonsingular, the controllability of $x_\text{f}^\text{p} (t)$ implies the controllability of $x_\text{f} (t)$. The matrix $(P \otimes I_n) A_\text{f} (P^{-1} \otimes I_n)=\text{blockdiag}(\check{A}+\check{B}\check{K}+\lambda_1 \check{H} \check{C},..., \, \check{A}+\check{B}\check{K}+\lambda_{N_\text{f}} \check{H} \check{C})$ is block diagonal and $(P \otimes I_n)(l_{\text{fa}}\otimes \check{H}\check{C})=Pl_{\text{fa}} \otimes \check{H}\check{C}$. Consequently, \eqref{e:th2} can be expressed in the following form:
	\begin{equation}\label{e:th2:2}
	\begin{split}
	\dot{x}^p_\text{f} (t)= &\text{blockdiag}(\check{A}+\check{B}\check{K}+\lambda_1 \check{H} \check{C},..., \, \check{A}+\check{B}\check{K}+\lambda_{N_\text{f}} \check{H} \check{C}) \\
	&\times  x_\text{f}^\text{p} (t)+(P l_{\text{fa}}\otimes \check{H}\check{C})x_\text{a}(t).
	\end{split}
	\end{equation}
	
	\noindent Since the rows of $P$ are the right eigenvectors of $L_\text{f}$, in view of Proposition \ref{pro:1}, the controllability of $(L_\text{f}, \, l_{\text{fa}})$ can be interpreted as not having completely zero rows in the matrix $P l_{\text{fa}}$. The vector $x_\text{f}(t)$ contains the states of $N_\text{f}$ followers, however, due to the similarity transformation, $x_\text{f}^\text{p} (t)$ contains a combination of these states, but still one has $N_\text{f}$ modes that are the $N_\text{f}$ different blocks of $\text{blockdiag}(\check{A}+\check{B}\check{K}+\lambda_1 \check{H} \check{C},..., \, \check{A}+\check{B}\check{K}+\lambda_{N_\text{f}} \check{H} \check{C})$. Next we provide and prove the necessary and sufficient conditions of our proposed methodology that are stated in this theorem.

	\textit{\textbf{Necessary Condition}}: Assume the $j$-th mode of \eqref{e:th2:2} is controllable through $x_\text{a}(t)$, while either $(\check{A}+\check{B}\check{K}+\lambda_j \check{H} \check{C},\check{H}\check{C})$ is not controllable or the $j$-th row of $P l_{\text{fa}}$ is zero. Due to block diagonal structure of \eqref{e:th2:2}, either the uncontrollability of $(\check{A}+\check{B}\check{K}+\lambda_j \check{H} \check{C},\check{H}\check{C})$ or the $j$-th row of $P l_{\text{fa}}$ being zero results in the uncontrollability of the mode $j$, which contradicts the assumption on this mode.

	\textit{\textbf{Sufficient Condition}}: Suppose that the mode $j$ is uncontrollable, while $(\check{A}+\check{B}\check{K}+\lambda_j \check{H} \check{C},\check{H}\check{C})$ is controllable and the $j$-th row of $P l_{\text{fa}}$ is nonzero. However, from the block diagonal structure of \eqref{e:th2:2}, the mode $j$ being uncontrollable implies that either $(\check{A}+\check{B}\check{K}+\lambda_j \check{H} \check{C},\check{H}\check{C})$ is uncontrollable or the $j$-th row of $P l_{\text{fa}}$ is zero, which is a contradiction. This completes the proof of the theorem.	
\end{proof}

\begin{remark}
	As shown in Theorem \ref{th:conses1}, the problem of interest here is to show that there exists a proper $x_\text{a}(t)$ that satisfies the controllability objective provided in Definition \ref{def:control2}. However, designing the attack signal $a(t)$ such that $x_\text{a}(t)$ follows the adversary's desired trajectory is not within the scope of this paper and is not addressed here.
\end{remark}

\begin{remark}
	Generally speaking, the difference between the goals in Definitions \ref{def:control} and \ref{def:control2} has resulted in different types of conditions that need to be satisfied in Theorems \ref{th:iff} and \ref{th:conses1}. In Theorem \ref{th:iff}, the conditions are more restrictive, however they ensure  controllability over the entire network for the adversary. Nevertheless, the main objective of the malicious hacker is to exert the maximum possible influence on the MAS given the available resources. Consequently, the adversary may not be able to control the entire network as studied in Theorem \ref{th:iff}, whereas they can still compromise the system and lead the MAS to dangerous trajectories only if the conditions in Theorem \ref{th:conses1} are satisfied. This result is illustrated through the numerical example that is provided in Section \ref{s:exmple}.
\end{remark}

\subsection{Cyber Security Controllability Index}
As shown in Theorem \ref{th:conses1}, the only condition on controllability of the MAS that connects the structure of the communication graph among the followers and the directly attacked agents is the controllability of $(L_\text{f},l_{\text{fa}})$. By leveraging this controllability condition, we aim to define two security metrics for the MAS. These notions can be used to evaluate the security of the MAS with respect to their controllability by an adversarial intruder. In this subsection, we assume that all the conditions in Theorem \ref{th:conses1}, except for the controllability of $(L_\text{f},l_{\text{fa}})$, hold true. Let us denote $\hat{L}_\text{f}=PL_\text{f} P^{-1}=\text{diag}(\lambda_1,...,\lambda_{N_\text{f}})$ and $\hat{l}_\text{fa}=Pl_\text{fa}$, where rows of $P$ are the $N_\text{f}$ right eigenvectors of $L_\text{f}$.

\begin{definition}\label{def:SCIi}
	The security controllability index of the directly attacked agent $i$, designated by $SCI_i$, is defined by:
	\begin{equation}\label{e:SCIi}
	SCI_i=\text{rank}(\mathcal{C}_i), \, \, \, \, i=N_\text{f}+1,..., \, N,
	\end{equation}
	
	\noindent where $\mathcal{C}_i=[(\hat{l}_\text{fa})_i, \, \hat{L}_\text{f} (\hat{l}_\text{fa})_i, \dots, \, \hat{L}_\text{f}^{N_\text{f}-1} (\hat{l}_\text{fa})_i]$ denotes the controllability matrix (considered not to be ill-conditioned) and $(\hat{l}_\text{fa})_i$ denotes the $i$-th column of $\hat{l}_\text{fa}$.
\end{definition}

The maximum value for $SCI_i$ can be $N_\text{f}$, which if satisfied implies that all the followers can be manipulated and controlled via the agent $i$.

\begin{definition}\label{def:SCI}
	The security controllability index (SCI) of the MAS is defined as
	\begin{equation}\label{e:SCI}
	SCI=\text{rank}(\mathcal{C}),
	\end{equation}
	
	\noindent	where $\mathcal{C}=[\hat{l}_\text{fa}, \, \hat{L}_\text{f} \hat{l}_\text{fa}, \dots, \, \hat{L}_\text{f}^{N_\text{f}-1} \hat{l}_\text{fa}]$.
\end{definition}

The problem for the adversary is to find the minimum number of directly attacked agents that gives the full control over the multi-agent network. More specifically, the adversary's goal is to $\text{minimize} \, |N_\text{a}|$ such that $SCI=N_\text{f}$. In the literature this problem is referred to as actuator placement problem \cite{actuatorplacement}. Solving the above minimization problem provides the adversary with the minimum required number of agents that the hacker needs to compromise and attack. A few methods that incorporate graph of the network to select agents for ensuring controllability over the MAS have been suggested in \cite{rahmani2009controllability,ji2006leader,slotine,leader1,leader2}.

\begin{remark}\label{rem:index}
	Due to the possibility of existence of sufficiently small singular values and ill-conditioning of the matrices
$\mathcal{C}$ and $\mathcal{C}_i$ for $i=N_\text{f}+1,..., \, N$ in \eqref{e:SCIi} and \eqref{e:SCI}, one may have nearly singular matrices. In such cases  $\text{rank}(\mathcal{C})$ and $\text{rank}(\mathcal{C}_i)$  can be computed by imposing a tolerance condition on computation of the rank such that if the singular value is smaller than a pre-specified tolerance level it is then considered to be  zero.
\end{remark}

\section{Zero Dynamics Attacks Through the Communication Links}\label{s:zero}
Given an $s=s_\text{a}$ and the dynamics of the directly attacked agents in \eqref{e:sys1_attacked}, the zero dynamics of the MAS are those $s_\text{a}$ in which the Rosenbrock system matrix
\begin{equation}\label{e:sysmtrx_a}
P_\text{a}(s)=\begin{bmatrix}
sI-A_\text{a} & -(I_{N_\text{a}} \otimes \check{H}_\text{a}) \\
I_{N_\text{a}} \otimes C & 0
\end{bmatrix}
\end{equation}

\noindent is rank deficient, i.e., its rank falls below its normal rank. This implies that there exist nonzero $x_{\text{a}0}$ and $a_0$ such that
\begin{equation}\label{e:zero_a}
\begin{bmatrix}
s_\text{a}I-A_\text{a} & -(I_{N_\text{a}} \otimes \check{H}_\text{a}) \\
I_{N_\text{a}} \otimes \check{C} & 0
\end{bmatrix} \begin{bmatrix}
x_{\text{a}0} \\
a_0
\end{bmatrix}=0,
\end{equation}

\noindent where $X_\text{a}(t)=x_{\text{a}0} e^{s_\text{a} t}$ with $X_\text{a}(t)$ defined as the solution to \eqref{e:sys1_attacked} and $a(t)=a_{0} e^{s_\text{a} t}$.

The zero dynamics of the followers \eqref{e:sys1_followers} are defined as $s=s_\text{f}$ and are associated with nonzero directional vectors $x_{\text{f}0}$ and $x_{\text{af}}$ such that the following is satisfied:
\begin{equation}\label{e:zero_f}
\begin{bmatrix}
s_\text{f}I-A_\text{f} & -(l_{\text{fa}}\otimes \check{H}\check{C}) \\
I_{N_\text{f}} \otimes \check{C} & 0
\end{bmatrix} \begin{bmatrix}
x_{\text{f}0} \\
x_{\text{af}}
\end{bmatrix}=0,
\end{equation}

\noindent where $X_\text{f}(t)=x_{\text{f}0} e^{s_\text{f} t}$ with $X_\text{f}(t)$ as the solution to \eqref{e:sys1_followers} and $X_\text{a}(t)=x_{\text{af}} e^{s_\text{f} t}$.

\begin{definition}\label{def:zero}
	The zero dynamics $s_\text{a}$ and $s_\text{f}$ are excited in the systems \eqref{e:sys1_attacked} and \eqref{e:sys1_followers} if their initial conditions and the attack signal satisfy the conditions in \eqref{e:zero_a} and \eqref{e:zero_f}, respectively.
\end{definition}

From \eqref{e:zero_a} and \eqref{e:zero_f} one can conclude that the differences that exist between the attacked agents and the followers can result in having different zero dynamics in these two groups. Moreover, in case of an attacker exciting the zero dynamics, the states should satisfy $x_i(t) \in \text{ker}(C)$, for $i=1, \dots, \, N$, to have a zero output in the system \cite{siolcps}.

\begin{lem}\label{lem:zero_excite}
	The zero dynamics of the followers \eqref{e:sys1_followers} and the directly attacked agents \eqref{e:sys1_attacked} are excited by the adversary in the sense of Definition \ref{def:zero} if \eqref{e:zero_f} and \eqref{e:zero_a} for $s_\text{a}=s_\text{f}$ hold true, while $(l_{\text{fa}}\otimes \check{H}\check{C})x_{\text{af}} \ne 0$ and $(I_{N_\text{a}} \otimes \check{H}_\text{a})a_{0} \ne 0$, respectively.
\end{lem}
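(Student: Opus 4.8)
The plan is to exhibit a single exponential trajectory together with a single attack signal that simultaneously drives both subsystems along their zero directions while producing identically zero outputs, which is precisely what Definition \ref{def:zero} demands for simultaneous excitation. Concretely, I would set $a(t)=a_0 e^{s t}$, $X_\text{a}(t)=x_{\text{a}0}e^{s t}$ and $X_\text{f}(t)=x_{\text{f}0}e^{s t}$ with the common frequency $s:=s_\text{a}=s_\text{f}$, and then verify that this triple satisfies \eqref{e:sys1_attacked}, \eqref{e:sys1_followers}, and both output constraints, using the two block rows of \eqref{e:zero_a} and \eqref{e:zero_f}.

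First I would handle the directly attacked agents. Reading off the two block rows of \eqref{e:zero_a} gives $(s I-A_\text{a})x_{\text{a}0}=(I_{N_\text{a}}\otimes\check{H}_\text{a})a_0$ and $(I_{N_\text{a}}\otimes\check{C})x_{\text{a}0}=0$. Differentiating the ansatz yields $\dot{X}_\text{a}(t)=s\,x_{\text{a}0}e^{st}$, so the first identity is exactly the statement that $X_\text{a}(t)=x_{\text{a}0}e^{st}$ solves \eqref{e:sys1_attacked} under $a(t)=a_0 e^{st}$, while the second identity makes the attacked output $(I_{N_\text{a}}\otimes\check{C})X_\text{a}(t)$ vanish for all $t$. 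The non-degeneracy hypothesis $(I_{N_\text{a}}\otimes\check{H}_\text{a})a_0\neq 0$ then guarantees a genuine attack-induced excitation: were $(I_{N_\text{a}}\otimes\check{H}_\text{a})a_0=0$, the first identity would collapse to $(sI-A_\text{a})x_{\text{a}0}=0$, i.e. an unobservable mode of $A_\text{a}$ rather than a zero excited through the compromised links.

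Next I would turn to the followers, where the real subtlety lies. The two block rows of \eqref{e:zero_f} read $(sI-A_\text{f})x_{\text{f}0}=(l_{\text{fa}}\otimes\check{H}\check{C})x_{\text{af}}$ and $(I_{N_\text{f}}\otimes\check{C})x_{\text{f}0}=0$. In the cascade \eqref{e:sys}, the input that actually drives \eqref{e:sys1_followers} is the attacked trajectory $X_\text{a}(t)$ itself, so the forcing term $A_\text{fa}X_\text{a}(t)=(l_{\text{fa}}\otimes\check{H}\check{C})x_{\text{a}0}e^{st}$ must coincide with the forcing $(l_{\text{fa}}\otimes\check{H}\check{C})x_{\text{af}}e^{s_\text{f}t}$ posited in \eqref{e:zero_f}. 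This is exactly where $s_\text{a}=s_\text{f}$ is invoked to align the two exponential rates, and where the directional vector $x_{\text{af}}$ describing the attacked state in the follower problem is identified with the attacked state $x_{\text{a}0}$ produced by \eqref{e:zero_a}; with this identification the first row of \eqref{e:zero_f} becomes $s\,x_{\text{f}0}=A_\text{f}x_{\text{f}0}+A_\text{fa}X_\text{a}$ and the second row forces the follower output to vanish. The condition $(l_{\text{fa}}\otimes\check{H}\check{C})x_{\text{af}}\neq 0$ plays the analogous non-degeneracy role: it ensures the coupling block $A_\text{fa}$ genuinely transmits the excitation into the followers, so that $x_{\text{f}0}$ is not merely an unobservable mode of $A_\text{f}$.

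The step I expect to be the main obstacle is precisely this consistency of the single coupled signal $X_\text{a}(t)$: the same exponential must serve both as the zero-excited trajectory of \eqref{e:sys1_attacked} and as the forcing that drives \eqref{e:sys1_followers} along its own zero direction, which is what forces the matching of the two frequencies and the identification of the two direction vectors; without $s_\text{a}=s_\text{f}$ no single trajectory can play both roles. Once this consistency is secured, I would collect the pieces: the constructed $(a(t),X_\text{a}(t),X_\text{f}(t))$ satisfies the cascade dynamics with $(I_{N_\text{a}}\otimes\check{C})X_\text{a}(t)=0$ and $(I_{N_\text{f}}\otimes\check{C})X_\text{f}(t)=0$, while the attack and the inter-group coupling are nonzero by the two stated inequalities. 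Hence the initial conditions and the attack signal satisfy \eqref{e:zero_a} and \eqref{e:zero_f} at once, which is exactly the assertion that both sets of zero dynamics are excited in the sense of Definition \ref{def:zero}, completing the argument.
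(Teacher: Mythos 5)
Your proof is correct, and it is actually more complete than the paper's own argument, though the two share the same key insight about the side conditions. The paper's proof is a terse contrapositive: it supposes $x_{\text{af}}$ and $a_0$ lie in $\text{ker}(l_{\text{fa}}\otimes\check{H}\check{C})$ and $\text{ker}(I_{N_\text{a}}\otimes\check{H}_\text{a})$ respectively, observes that then the attack signal has no impact on exciting the zero dynamics, and concludes that the non-kernel conditions are necessary for a genuine zero-dynamics attack; it never explicitly constructs the coupled exponential trajectory, nor does it explain why $s_\text{a}=s_\text{f}$ must be imposed. You instead argue in the forward direction: you build $(a(t),X_\text{a}(t),X_\text{f}(t))$ as exponentials at the common rate $s=s_\text{a}=s_\text{f}$, verify from the block rows of \eqref{e:zero_a} and \eqref{e:zero_f} that they solve the cascade \eqref{e:sys1_attacked}--\eqref{e:sys1_followers} with identically zero outputs, and isolate the real content of the hypothesis: the single signal $X_\text{a}(t)$ must simultaneously be the zero-excited trajectory of the attacked agents and the forcing of the followers, which is precisely what forces the frequency matching and the identification of $x_{\text{af}}$ with $x_{\text{a}0}$ (strictly, one only needs $(l_{\text{fa}}\otimes\check{H}\check{C})(x_{\text{af}}-x_{\text{a}0})=0$, but the paper makes the same exact identification, so this is not a discrepancy). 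Your reading of the two non-degeneracy conditions --- that kernel membership would collapse the excitation to an unobservable free mode rather than an attack-induced one --- coincides with the paper's. What your route buys is an actual verification of the sufficiency claim as the lemma states it; what the paper's route buys is brevity and an explanation of why those two side conditions belong in the hypothesis at all.
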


\begin{proof}
	Suppose the output of the system is zero with nonzero $x_{\text{af}}$ and $a_0$ that are in  the $\text{ker}(l_{\text{fa}}\otimes \check{H}\check{C})$ and $\text{ker}(I_{N_\text{a}} \otimes \check{H}_\text{a})$, respectively. This implies that the attack signal does not have an impact on exciting the zero dynamics. Therefore, in  case of zero dynamics attack by the adversary it is necessary for the attack signals to satisfy $x_{\text{af}} \notin \text{ker}(l_{\text{fa}}\otimes \check{H}\check{C})$ and $a_0 \notin \text{ker}((I_{N_\text{a}} \otimes \check{H}_\text{a}))$. This completes the proof of the lemma.
\end{proof}

\begin{theorem}\label{th:zero}
	The adversary is not capable of simultaneously exciting the zero dynamics of the directly attacked agents \eqref{e:sys1_attacked} and the followers \eqref{e:sys1_followers} in the sense of Definition \ref{def:zero}.
\end{theorem}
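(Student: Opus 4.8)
The plan is to argue by contradiction: suppose the adversary could excite both sets of zero dynamics simultaneously in the sense of Definition \ref{def:zero}. By Lemma \ref{lem:zero_excite} this forces the two modes to coincide, $s_\text{a}=s_\text{f}=s$, and it requires both $(l_{\text{fa}}\otimes \check{H}\check{C})x_{\text{af}} \ne 0$ and $(I_{N_\text{a}} \otimes \check{H}_\text{a})a_0 \ne 0$. The first observation I would make is that once $s_\text{a}=s_\text{f}$ the two conditions \eqref{e:zero_a} and \eqref{e:zero_f} are no longer independent: the directional vector $x_{\text{af}}$ that enters the follower system \eqref{e:zero_f} is precisely the profile of the attacked-agent trajectory, since \eqref{e:zero_a} and \eqref{e:zero_f} describe one and the same signal $X_\text{a}(t)=x_{\text{a}0}e^{st}=x_{\text{af}}e^{st}$. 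Consistency of the shared exponential mode therefore yields $x_{\text{af}}=x_{\text{a}0}$.

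The key structural step is then to exploit the output-nulling constraint carried by \eqref{e:zero_a}. Its second block-row reads $(I_{N_\text{a}} \otimes \check{C})x_{\text{a}0}=0$, i.e. the directly attacked agents produce identically zero output. Because the follower-coupling matrix factors as the product $\check{H}\check{C}=\check{H}\,\check{C}$, I would apply Lemma \ref{lem:kroneker}$(ii)$ twice to write $(l_{\text{fa}}\otimes \check{H}\check{C})=(l_{\text{fa}}\otimes I)(I_{N_\text{a}} \otimes \check{H})(I_{N_\text{a}} \otimes \check{C})$. Substituting $x_{\text{af}}=x_{\text{a}0}$ and using $(I_{N_\text{a}} \otimes \check{C})x_{\text{a}0}=0$ then gives $(l_{\text{fa}}\otimes \check{H}\check{C})x_{\text{af}}=(l_{\text{fa}}\otimes I)(I_{N_\text{a}} \otimes \check{H})\cdot 0 = 0$.

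This is the desired contradiction: Lemma \ref{lem:zero_excite} demanded $(l_{\text{fa}}\otimes \check{H}\check{C})x_{\text{af}} \ne 0$ for the follower zero dynamics to be genuinely excited, yet the attacked-agent output-nulling condition forces this very quantity to vanish. Hence no admissible choice of $s$, $x_{\text{a}0}$ and $a_0$ can excite both sets of zero dynamics at once, which proves the theorem. I expect the main obstacle to be the identification $x_{\text{af}}=x_{\text{a}0}$: one must establish that the vector appearing as the ``exogenous'' forcing of the followers is not a free parameter but is pinned down by the attacked-agent zero-dynamics profile once the two modes are matched. Everything afterwards reduces to the elementary fact that $\check{H}\check{C}$ annihilates any attacked-agent state already lying in $\text{ker}(I_{N_\text{a}} \otimes \check{C})$, together with the routine Kronecker manipulation of Lemma \ref{lem:kroneker}.
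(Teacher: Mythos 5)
Your proof is correct and follows essentially the same route as the paper: both hinge on the identification $x_{\text{af}}=X_\text{a}(0)=x_{\text{a}0}$ and on the observation that the output-nulling block of \eqref{e:zero_a}, $(I_{N_\text{a}}\otimes\check{C})x_{\text{a}0}=0$, forces $(l_{\text{fa}}\otimes\check{H}\check{C})x_{\text{af}}=0$, contradicting the necessary condition of Lemma \ref{lem:zero_excite}. Your Kronecker factorization $(l_{\text{fa}}\otimes\check{H}\check{C})=(l_{\text{fa}}\otimes I)(I_{N_\text{a}}\otimes\check{H})(I_{N_\text{a}}\otimes\check{C})$ is simply a cleaner rendering of the paper's block-wise kernel argument (which the paper states somewhat loosely via $x_{\text{a}0}=I_{N_\text{a}}\otimes\check{x}_{\text{a}0}$, $\check{x}_{\text{a}0}\in\text{ker}(\check{C})$), and your single contradiction argument is logically equivalent to the paper's two-direction presentation.
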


\begin{proof}
	Suppose the adversary excites the zero dynamics of directly attacked agents in \eqref{e:sys1_attacked} so that $X_\text{a}(t)=x_{\text{a}0} e^{s_\text{a} t}$ is in $\text{ker}(I_{N_\text{a}} \otimes \check{C})$. Consequently, $x_\text{a0}$ should be of the form $x_\text{a0}=I_{N_\text{a}} \otimes \check{x}_\text{a0}$, where $\check{x}_\text{a0} \in \text{ker}(\check{C})$. Since $(l_{\text{fa}}\otimes \check{H}\check{C})\times (I_{N_\text{a}} \otimes \check{x}_\text{a0})=l_{\text{fa}}\otimes \check{H}\check{C} \check{x}_\text{a0}=0$, one can conclude that $x_{\text{af}}=X_\text{a}(0)=x_\text{a0} \in \text{ker}(l_{\text{fa}}\otimes \check{H}\check{C})$, which according to Lemma \ref{lem:zero_excite} implies that the adversary is not capable of exciting the zero dynamics of the followers. Now let us assume \eqref{e:zero_f} holds and the zero dynamics of the followers are excited by the adversary. Therefore, $(l_{\text{fa}}\otimes \check{H}\check{C})x_{\text{af}} \ne 0$ is satisfied. This implies that $X_\text{a}(0)=x_{\text{af}} \notin \text{ker}(I_{N_\text{a}} \otimes \check{C})$. Hence, \eqref{e:zero_a} does not hold and the zero dynamics of the directly attacked agents \eqref{e:sys1_attacked} cannot be excited by the adversary. This completes the proof of the theorem.
\end{proof}

\section{Numerical Example}\label{s:exmple}
In this numerical example, the controllability conditions that are provided in Theorems \ref{th:iff} and \ref{th:conses1} are studied for a MAS system consisting of 6 agents. The agent dynamics  and its observer are given by \eqref{e:agent_i}, and \eqref{e:obs_i}, respectively, with the following matrices \cite{li2009consensus}:
\begin{equation*}
\begin{split}
A & =\begin{bmatrix}
-2 & 2 \\
-1 & 1
\end{bmatrix}, \, B=\begin{bmatrix}
1 \\
0
\end{bmatrix}, \, C=\begin{bmatrix}
1 & 0 \\
0 & 1
\end{bmatrix}, \\
H &= \begin{bmatrix}
0 & 0.3 \\
-0.3 & 0
\end{bmatrix}, \, K=\begin{bmatrix}
-1 & 2
\end{bmatrix}.
\end{split}
\end{equation*}

The communication graph among the agents is shown in Fig. \ref{fig:graph}, and its corresponding Laplacian matrix is $L=[$1, 0, 0, 0, 0, -1; 0, 2, 0, 0, -1, -1; 0, -1, 1, 0, 0, 0; 0, -1, -1, 2, 0, 0; 0, -1, 0, 0, 1, 0; 0, 0, 0, 0, -1, 1$]$.

\begin{figure}[!t]
	\centering
	\centerline{\includegraphics[width=1.3in]{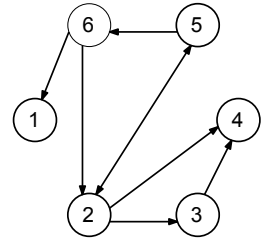}}
	\caption{Communication graph of the MAS system.}
	\label{fig:graph}
\end{figure}

Let us assume that the incoming communication links of agents 4, $5$, and 6 are under  attack so that one obtains $L_\text{f}=[$1, 0, 0; 0, 2, 0; 0, -1, 1$]$ and $l_{\text{fa}}=[$0, 0, -1; 0, -1, -1; 0, 0, 0$]$, where the eigenvalues of $L_\text{f}$ are $\lambda_1=1$, $\lambda_2=1$, $\lambda_3=2$, corresponding to the right eigenvectors $[1,\, 0, \, 0]^T$, $[0, \, 0, \, 1]^T$, and $[0, \, 0.7071, \, -0.7071]^T$, respectively. Since the geometric multiplicity of each eigenvalue is  equal to its algebraic multiplicity, conditions in Assumption \ref{assum:1} hold. In this example, the conditions in Theorem \ref{th:iff} are not satisfied and $\text{rank}(\mathcal{C}^*)=5$. Hence, the adversary does not have control over the entire MAS system as provided in Definition \ref{def:control}, however, an adversary may still impact the followers as described in Definition \ref{def:control2} and  Theorem \ref{th:conses1}.

Considering Theorem \ref{th:conses1},  the rank of the controllability matrices $(\check{A}+\check{B}\check{K}+d_i \check{H} \check{C},\check{H}_a)$ for $i=4,\, 5, \, 6$ are equal to 4, the rank of the controllability of the pair $(L_\text{f},l_{\text{fa}})$ is 3, and the rank of the controllability  of $(\check{A}+\check{B}\check{K}+\lambda_j \check{H} \check{C},\check{H}\check{C})$ for $j=1, \, 2, \, 3$ is equal to 4. Therefore, the adversary has the capability of manipulating and controlling the three agents 1, 2, 3 by simultaneously attacking the agents 4, 5, and 6.

As shown in Fig. \ref{fig:x}, the six agents reach a consensus and their states converge, while at $t=30$ (s) the adversary injects its attack signals to the agents 4, 5, and 6 and the remaining agents are controlled through the directly attacked agents. In Fig. \ref{fig:x}, to illustrate the capability of the adversary in controlling all the agents, the states of each agent are set to different values by choosing different attack signals for the directly attacked agents.

In Fig. \ref{fig:xCons}, it can be seen that the  attack that has occurred at $t=30$ (s) is designed such that the agents reach a new consensus that is desirable to the adversary. In this attack scenario, the directly attacked agents have the same attack signals so that they reach to the same point and the remaining agents follow them. This example illustrates that even without having full controllability over the MAS systems as stated in Definition \ref{def:control}, the adversary is capable of imposing a major impact on the trajectory and behavior of the agents.

\begin{figure}[!t]
	\centering
	\centerline{\includegraphics[width=\columnwidth]{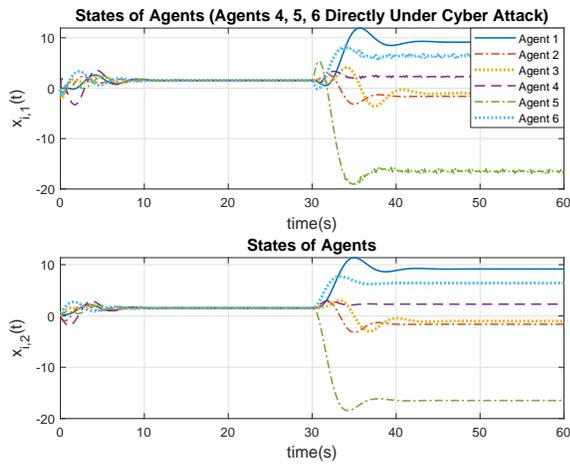}}
	\caption{State trajectories of the six agents in presence of cyber attack injected at $t=30$ (s).}
	\label{fig:x}
\end{figure}

\begin{figure}[!t]
	\centering
	\centerline{\includegraphics[width=\columnwidth]{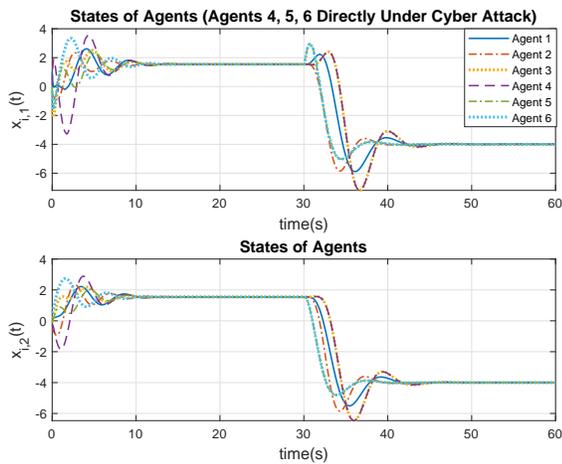}}
	\caption{Change in the consensus set point by the adversary injecting cyber attack signals  at $t=30$ (s).}
	\label{fig:xCons}
\end{figure}

The security controllability index for the directly attacked agents are $SCI_4=0$, $SCI_5=1$, $SCI_6=2$, and for the MAS system is $SCI=3$. It follows that $SCI_4=0$, which implies that through agent 4 the adversary is not capable of controlling any of the followers. However, attacking the agents 5 and 6 do not provide controllability over the agent 4 to the adversary, and hence, in this case it will be attacked directly.

\section{Conclusion}
In this paper, certain types of  cyber attacks on MAS systems were investigated and developed. In one cyber attack scenario, the adversary targets the incoming communication links for a team of agents and disguises the attack signals as transmitted information among the agents. Therefore, there are two groups of agents, those that are first directly attacked, and those that can be considered as followers of the first group. The conditions under which the adversary has full control over the two agent groups were investigated. The notions of security controllability for each of the directly attacked agents as well as the entire MAS system have been proposed and developed. These notions can be used to identify agents that allow the adversary high control authority over the MAS network. Finally, it was shown that the adversary is not capable of simultaneously performing zero dynamics attacks on the directly attacked agents and the followers. Relaxing the assumption in Theorem \ref{th:conses1} is challenging and is a focus of our future work. Another important topic for further investigation is to study detectability conditions of cyber attacks on the MAS systems that were studied in this paper.

\bibliographystyle{IEEEtran}
\bibliography{IEEEabrv,CCTARef}

\end{document}